\newtheorem{defn}{Definition}
\newtheorem{prop}{Proposition}
\title{Evaluating the Alignment of a Data Analysis between Analyst and Audience}
\author[1]{Lucy D'Agostino McGowan\footnote{Corresponding author email: mcgowald@wfu.edu}}
\author[2]{Roger D. Peng}
\author[3]{Stephanie C. Hicks}
\affil[1]{Department of Statistical Sciences, Wake Forest University}
\affil[2]{Department of Statistics and Data Sciences, University of Texas at Austin}
\affil[3]{Departments of Biostatistics and Biomedical Engineering, Johns Hopkins University}
\begin{document}
\maketitle



\begin{abstract}
A challenge that data analysts face is building a data analysis that is useful for a given consumer. Previously, we defined a set of principles for describing data analyses that can be used to create a data analysis and to characterize the variation between analyses. Here, we introduce a concept that we call the \textit{alignment} of a data analysis between the data analyst and a consumer. We define a successfully aligned data analysis as the matching of principles between the analyst and the consumer for whom the analysis is developed. In this paper, we propose a statistical model for evaluating the alignment of a data analysis and describe some of its properties. We argue that this framework provides a language for characterizing alignment and can be used as a guide for practicing data scientists and students in data science courses for how to build better data analyses. 
\end{abstract}

\noindent \textbf{Keywords}: Data science, data analyses, evaluation, analytic design theory

\noindent \textbf{Author Contributions}: LDM, SCH, and RDP equally conceptualized, wrote, and approved the manuscript. 

\noindent \textbf{Disclosures}: None.

\noindent \textbf{Acknowledgements}: The authors do not have any funding to acknowledge.

\clearpage

\section{Introduction}


In the practice of data science, a data scientist builds a data analysis to extract knowledge and insights from examining data \citep{tuke:1962, tukeywilk1996}. The discussion of how to build a data analysis often proceeds without explicit reference to an audience or consumer for whom the data analysis is being developed. Indeed there is much for a data analyst to consider on their own with respect to statistical techniques, visualization methods, data processing approaches, and computational algorithms that do not involve the needs or requirements of anyone in particular. However, a critical goal for many data analyses is to be useful or persuasive to another person. This consumer could range from simply the person doing the analysis to a much broader external audience.

A challenge for the data analyst then is to build a data analysis that is useful for the intended consumer while simultaneously adhering to generally accepted statistical principles for high quality analyses. One extreme would be for the data analyst to ignore the consumer and build the data analysis as they see fit, hoping that the consumer simply accepts the analysis upon presentation. Another extreme would involve the data analyst simply giving the consumer whatever they asked for (within the confines of statistical theory, of course) without the need for additional discussion. Neither extreme seems ideal, in general. Analysts who ignore the specific interests of the consumer risk being ignored in turn, rendering their analysis ultimately useless. Analysts who meet the consumer's specifications exactly miss an opportunity to educate the consumer on techniques or approaches that might better serve their needs. In either case, the analysis conducted is not as useful as it perhaps could have been.


One approach to addressing the challenge of building a useful analysis is to consider the consumer as part of the design of the analysis itself. The role of the consumer here is to contribute the definition of usefulness for a given analysis and specify their requirements for meeting that definition. Initially, the analyst may not completely agree with the definition of usefulness nor have a desire or ability to meet the requirements. There may be a lack of \textit{alignment} between the analyst and the consumer over what constitutes utility in the context of the analysis. Subsequently, a negotiation may occur between the analyst and consumer in order to come to agreement over what would make the analysis most useful and how the analyst might shape the analysis to achieve that goal.

This paper seeks to define the concept of alignment between a data analyst and a consumer and to discuss its role in the design of data analyses and in ensuring the usefulness of analytic outputs. Usefulness can mean a variety of things here, but typically, a useful analysis will influence a decision-making process in a scientific, business, or policy context. Analyses designed with a broad range of consumers in mind may serve a rhetorical purpose, aiming to convince an audience of a specific point or to inform them about an important issue. From the analyst's perspective, alignment can be thought of as a heuristic for guiding the organization of the elements of a data analysis and its presentation.


We start by leveraging a set of principles of data analyses that we previously introduced that can be used to guide the creation of a data analysis and to characterize the variation between data analyses~\citep{dm2022design}. These principles of data analysis are prioritized qualities or characteristics that are relevant to the analysis, as a whole or individual components. For a given data analysis, a data analyst can assign allocations to these principles to increase or decrease the amount of resources dedicated to these characteristics in a given data analysis. These allocations which can be highly influenced by outside constraints, such as time or budget. In this way, different allocations of the principles by the analyst can lead to different data analyses, all addressing the same underlying question~\citep{Silberzahn2018}. 

Next, we use this set of principles for data analysis to propose a framework for modeling the alignment of a data analysis that relies critically on the consumer for which the analysis is developed. In particular, as every data analysis has a consumer that views the analysis with their own preconceived notions, characteristics, and biases, we consider the allocations of the principles by both the analyst and the consumers, who may have a different perspective of how these various principles should be allocated for a given data analysis. Neither set of principles allocated by the analyst or the consumer is necessarily correct or incorrect. However, we previously hypothesized that how successfully aligned a data analysis is may depend on how well-matched the analyst's allocations are to the consumer's allocations for a given analysis~\citep{hickspeng2019-elementsandprinciples}. 

In this paper, we make these ideas more concrete and introduce a metric of quality evaluation that we call the \textit{alignment} of a data analysis between the analyst and the consumer. We define an aligned data analysis as the matching of allocated principles between the analyst and the consumer on which the analysis is developed. In the following sections, we formalize those ideas by proposing a statistical model and general framework for evaluating the alignment of a data analysis (Sections~\ref{sec:framework}--\ref{sec:statmodel}) and discuss the implications of this framework (Section \ref{sec:properties}). Finally, we present an example with real-world data~(Section~\ref{sec:data}) and discuss how this framework can be used as a guide for practicing data scientists and students in data science courses for how to build better data analyses (Section \ref{sec:discussion}). 

\subsection{Example: Initially Misaligned Analysis}

Consider two principles that may play a role in a data analysis: The reproducibility of the analysis~\citep{peng:2011} and how well-matched the data are to the analysis~\citep{dm2022design}. Suppose an analyst is in a organization where there is a request to do a quick analysis for a presentation in an internal meeting one week from now. The analyst, as a general matter, may feel it worthwhile to dedicate time and energy to ensure that the analysis is reproducible, even if it makes the analysis take longer to prepare and execute. In addition, the analyst notices that while the exact dataset for the analysis is not yet available, a dataset containing a similar surrogate measure is available now. Given that the analysis is needed relatively quickly, the analyst figures that the surrogate measure is sufficient. Meanwhile the person requesting the analysis (the consumer) would greatly prefer if more effort were taken to obtain the exact dataset needed for the analysis, and that less time be spent on making it reproducible, given that the analysis will likely be a one-time product. For this analyst and consumer pair, we have a mismatch on these two principles in that the analyst would prefer to devote more time to making the analysis reproducible relative to matching the data while the consumer would prefer that more effort be spent on getting the better data relative to making the analysis reproducible. 

%

\section{Components of Variation Principle Allocation}
\label{sec:framework}

As described above and our in previous work, we consider data analyses to be constructed in a manner guided by a set of $K$ design principles \citep{dm2022design} characterizing the data analysis. Specifically, we defined principles of data analysis as \textit{data-matching}, \textit{exhaustive}, \textit{skeptical}, \textit{second-order}, \textit{clarity}, and \textit{reproducible}. In this paper, we assume that the $i^{th}$ data analyst allocates a set of resources $\left\{\alpha_i^{(1)}, \ldots, \alpha_i^{(K)}\right\}$ to each principle $k=1,\ldots,K$ where each $\alpha_i^{(k)}>0$ for all~$k$. The sum of the total allocation across the $K$ principles is $\alpha_i^{(0)} = \sum_{k=1}^K \alpha_i^{(k)}$. For example, $\alpha_i^{(0)}$ could be thought of as the total amount of money or effort available to allocate to each principle. We provide an illustration of this in Section~\ref{sec:data}.

Data analyses are built to be viewed by a consumer, which can be an individual person or a group of people and can include the data analyst them self. For now, we will consider the consumer to be an individual person, other than the data analyst, and consider the case when a consumer is more than one individual in Section \ref{sec:group-audiences}. As such, the consumer has their own preferences for how resources should be allocated for the analyst building the data analysis. For a given data analysis, the consumer allocations will be denoted by the set $\left\{\omega_j^{(1)},\dots,\omega_j^{(K)}\right\}$ such that $\omega_j^{(k)}>0$ and $\omega_j^{(0)} = \sum_{k=1}^K \omega_j^{(k)}$. 

\subsection{Fixed Variation in Principle Allocations}
\label{sec:fixed-var}

We allow for the possibility that there will be variation in the allocation of the principles from analysis to analysis, for both analyst and consumer. Some of that variation can be characterized as fixed, while other variation may be best considered as random. From the analyst's perspective, some of the determinants of how a given principle may be allocated are:
\begin{enumerate}
    \item \textit{Analysis-specific Resources}. Considerations about computing resources, time, budget, personnel, and other such resources and analysis characteristics can often require that an analyst allocate more or less to certain principles for analysis. For example, analyses that must be conducted in a short amount of time may be limited in their ability to explore multiple competing hypotheses and exhibit low skepticism.
    \item \textit{Question Significance and Problem Characteristics}. The significance of the question being addressed with the data may play a role in determining principle allocations. Questions of high significance, for example, may require a high degree of transparency or reproducibility. Questions of lower significance may be done in a ``quick-and-dirty'' fashion; should the question's significance change in the future the analysis may need to be re-done with a different set of principle allocations.
    \item \textit{Field-specific Conventions}. Analysts are often members of a field from which they may have received their training (e.g. statistics, economics, computer science, bioinformatics). Each field develops conventions regarding how analyses in their field should be conducted and we characterize this using a field-specific mean value for a given principle. Tukey~\cite{tuke:1962} emphasized that in data analysis, there is a heavy emphasis on ``judgment'', one particular form of which is based upon the experience of members of a given field. We consider each analyst and each consumer to be a member of a \textit{field} or profession. Let $f_i\in\{1,\dots,F\}$ be the index into a set of $F$ fields or professions for analyst $i$. An analyst who belongs to field $f_i$ will be trained in the conventions of that field, which places a field-specific mean value $\lambda^{(k)}_{f_i}$ for a given principle~$k$ compared to some reference principle~$r$. 
    \item \textit{Analytic Product}. Depending on the analytic product that will ultimately be presented to the consumer (e.g. PDF document, web-based dashboard, executable R Markdown document), the analyst may determine that certain principles should receive a greater or lesser allocation. 
\end{enumerate}

Similarly, the consumer for whom the analysis is being developed will determine their principle allocations based on a variety factors, including their \textit{perception} of resources available to the analyst, their judgment of the significance of the question, their own field-specific conventions (assuming the consumer and the analyst are not members of the same field), and their perception of what the analytic product should contain.

\subsection{Random Variation in Principle Allocations}

The above-enumerated list describes some of the fixed factors that may drive variation in how various data analytic principles are allocated. However, there may be variation that is more random in nature. In particular, we consider the randomness as arising from sampling from a population of analysts or potential consumers trained in specific fields. Different analysts, presented with the exact same question and data, will likely allocate principles differently and hence produce different analyses based on their own personal characteristics. Similarly, different consumers, considering the same analytic product, will allocate principles differently and evaluate the alignment of the analysis differently.

An individual analyst~$i$ will deviate from their expected relative field-specific allocation by an amount $\delta_i^{(k)}$ which we think of as being randomly distributed with mean $0$ and finite variance. Therefore, the field-specific principle contribution for analyst~$i$ is $\lambda^{(k)}_{f_i} + \delta^{(k)}_{i}$ for principle $k$ compared to a reference principle $r$ in any given data analysis. Similarly, consumer $j$ who belongs to field $f_j$ will have a relative field-specific principle contribution of $\lambda^{(k)}_{f_j} + \eta^{(k)}_{j}$, where $\eta^{(k)}_{j}$ is randomly distributed with mean $0$ and finite variance.

\subsection{Stages of Analytic Design}

We conceive of the analytic design process as broadly occurring in a sequence of stages~(Figure~\ref{fig:stages}). At the first or \textit{baseline} stage the analyst and consumer independently allocate various principles based on their field-specific conventions and personal views of the analysis. Following this stage is the \textit{analytic negotiation} stage, where the analyst consumer discuss the proposed analysis and negotiate over how various principles will be allocated. Finally, in the \textit{resolution} stage, both analyst and consumer adjust their principle allocations based on the analytic negotiation. We indicate the analyst's principle adjustments in the resolution stage, i.e. the change in principle allocation from the baseline stage, as $\phi_i^{(k)}$. Similarly, the consumer's principle adjustments are indicated as as $\theta_j^{(k)}$.

In some cases, the analyst may not have an opportunity to interact directly with the consumer to negotiate the principle allocations. In those cases, the analyst may obtain some indirect understanding of the consumer's expectations for an analysis by doing some background research. Such background research may lead to an adjustment of the analyst's principle allocations via $\phi_i^{(k)}$. However, the consumer's allocations will remain unchanged in the resolution stage (i.e. $\theta_j^{(k)}=0$) because there was never any direct interaction with the analyst.

\begin{figure}[tbh]
\begin{tikzpicture}[->,shorten >=1pt,auto,node distance=5cm,
                    thick,main node/.style={rectangle,draw,font=\sffamily\Large\bfseries}]

  \node[main node] (analyst) {Analyst};
  \node[main node, below of=analyst] (consumer) {Consumer};

  \draw[dashed, -] (consumer.east) -- ++(10cm,0) coordinate (dashend);
  \draw[dashed, -] (analyst.east) -- ++(10.3cm,0) coordinate (dashend_);
  \draw[dashed, <->] (dashend) -- ++ (0, 2cm) coordinate (theta);
  \draw[dashed, <->] (dashend_) -- ++ (0, -0.7cm) coordinate (phi);
  
  \node[main node, right of=analyst, node distance=5cm, yshift=-2.5cm] (negotiation) {Analytic Negotiation};

  \node[main node, right of=negotiation, node distance=7cm, yshift=1.5cm] (analyst_res) {Analyst};
  \node[main node, below of=analyst_res, node distance=1.75cm] (consumer_res) {Consumer};

  \path[every node/.style={font=\sffamily\small}]
    (analyst) edge node {} (negotiation)
    (consumer) edge node {} (negotiation);

  \path[every node/.style={font=\sffamily\small}]
    (negotiation) edge node {} (analyst_res)
    (negotiation) edge node {} (consumer_res);


   \node[font=\sffamily\bfseries, yshift=-3.1cm] at ($(analyst)!0.5!(consumer)$)   {Baseline};
   \node[font=\sffamily\bfseries, yshift=-3.1cm] at ($(negotiation)$)   {Negotiation};
   \node[font=\sffamily\bfseries, yshift=-3.7cm] at ($(analyst_res)!0.5!(consumer_res)$) {Resolution};
   \node[font=\sffamily\bfseries, xshift=0.5cm, yshift=0.4cm] at (phi) {$\phi_i^{(k)}$};
   \node[font=\sffamily\bfseries, xshift=0.5cm, yshift=-1cm] at (theta) {$\theta_j^{(k)}$};
\end{tikzpicture}
\caption{Stages of analytic design.}
\label{fig:stages}
\end{figure}
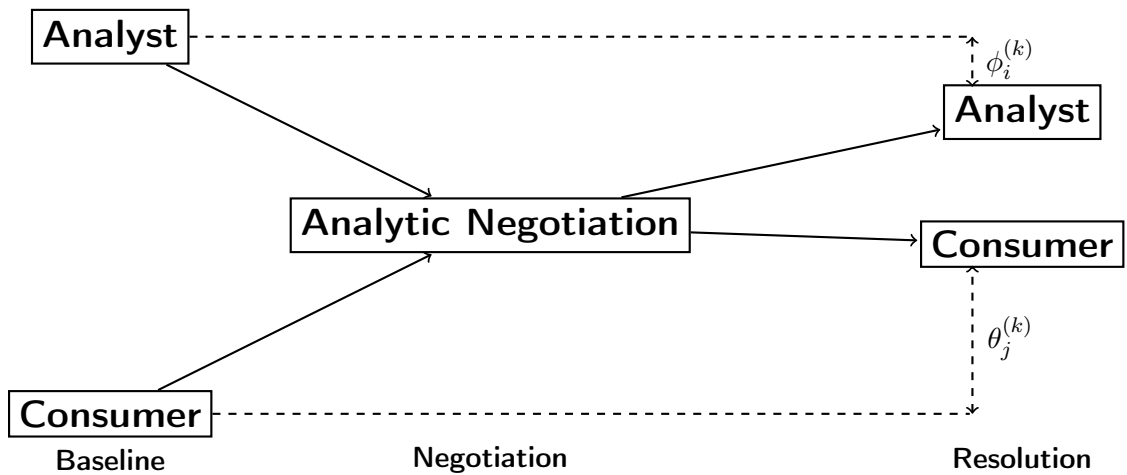

\section{Statistical Model for Principle Allocations}
\label{sec:statmodel}

Throughout text, we consider just a single data analysis at a time. For a given analysis and analyst $i$, the allocation assigned to a specific principle $k$ is $W_{i}^{(k)}$ and $\sum_{k=1}^K W_{i}^{(k)}=1$. Note here the allocations must sum to 1 across the $K$ principles, reflecting the reality that all analysts must decide how to allocate their priorities towards each principle when building a data analysis. We model the individual principle-specific allocations $W_{i}^{(k)}$ with the Dirichlet distribution:
\begin{equation}
\mathbf{W}_i 
=
\left(W_{i}^{(1)},\dots,W_{i}^{(K)}\right) 
\sim
\text{Dirichlet}\left(\alpha^{(1)}_{i},\dots,\alpha^{(K)}_{i}\right).
\label{eq:analyst}
\end{equation}
The parameters $\alpha_{i}^{(k)}$ are concentration parameters where $\alpha_{i}^{(k)}>0$. In the special case where the analyst gives the same allocation to all $K$ principles, the allocations would follow a symmetric Dirichlet distribution. For a given principle $k$, we can derive the marginal distribution from the Dirichlet and have
$$
W_{i}^{(k)}
\sim
\text{Beta}(\alpha_{i}^{(k)}, \alpha_i^{(0)}-\alpha_{i}^{(k)}).
$$
Where $\alpha_i^{(0)} = \sum_{k=1}^K\alpha_{i}^{(k)}$. We can then model the mean, $\mu_i^{(k)} = \alpha_i^{(k)}/\alpha_i^{(0)}$, as
\begin{equation}
\psi_{i}^{(k)}
=
\log\left(\frac{\mu^{(k)}_{i}}{\mu^{(r)}_{i}}\right)
=
\lambda_{f_i}^{(k)} + \delta_{i}^{(k)} + \phi_i^{(k)}\mathbf{1}\{\textrm{stage}=\textrm{resolution}\},
\label{eq:psi}
\end{equation}
where $r$ represents the reference principle, $\lambda_{f_i}^{(k)}$ is the field-specific coefficient for principle $k$ compared to principle $r$ for analyst $i$ in the field $f_i$, $\delta_{i}^{(k)}$ is analyst $i$'s deviation from the field for principle $k$, and $\phi_i^{(k)}$ represents the adjustment resulting from the analyst-consumer analytic negotiation. Here we would fit $K-1$ models where one of the $k$ principles is considered the reference ($r$) and therefore all coefficients in the model are equal to 0. We consider the analyst deviation $\delta_{i}^{(k)}$ to be randomly distributed across the population of potential analysts with mean $0$ and finite variance.

Analogous to the analyst's allocations, the allocation given to principle $k$ by consumer $j$ (who is a member of field $f_j$) can be written as $C_{j}^{(k)}$ with $\sum_{k=1}^K C^{(k)}_{j}=1$. We similarly model the vector $\mathbf{C}_j = 
\left(C_{j}^{(1)},\dots,C_{j}^{(K)}\right)$ as Dirichlet with concentration parameters $\omega_{j}^{(1)},\dots,\omega_{j}^{(K)}$.
We then similarly model the mean, $\mu_j^{(k)} = \omega_{j}^{(k)}/\omega_{j}^{(0)}$, as
\begin{equation}
\kappa^{(k)}_{j}
=
\log\left(\frac{\mu^{(k)}_{j}}{\mu^{(r)}_{j}}\right)
=
\lambda^{(k)}_{f_j} + \eta^{(k)}_{j} +\theta_j^{(k)}\mathbf{1}\{\textrm{stage}=\textrm{resolution}\}
\end{equation}
where $\lambda^{(k)}_{f_j}$ and $\eta_{j}^{(k)}$ are the field-specific coefficient and individual-level deviation from the field for the $j^{th}$ consumer, respectively, and $\theta_j^{(k)}$ represents the impact of the analyst-consumer analytic negotiation on the consumer's relative allocation allocation. Note that we consider $\eta_{j}^{(k)}$ to be independent of $\delta_{i}^{(k)}$ in the analyst's allocation model.

\subsection{Defining the Alignment of a Data Analysis}

With the analyst allocations in Equation~(\ref{eq:analyst}) and the consumer allocations, we can proceed to define the alignment of a data analysis.

\begin{defn}[Baseline Alignment]
The baseline alignment of a data analysis between analyst $i$ and consumer $j$ is defined in terms of the principle-specific allocation difference,
\begin{eqnarray}
B_{ij}^{(k)} & = & \left(\lambda_{f_i}^{(k)} - \lambda_{f_j}^{(k)}\right) + \left(\delta_{i}^{(k)}-\eta_{j}^{(k)}\right).
\label{eq:basedistance}
\end{eqnarray}
The overall baseline alignment for an analysis is defined via the collection of principle-specific allocation differences across all principles, $\mathbf{B}_{ij}=\left(B_{ij}^{(1)},\dots,B_{ij}^{(K)}\right)$.   
\label{def:baselinealign}
\end{defn}
The baseline alignment for an analysis represents the alignment that exists before the analyst and and consumer meet and negotiate any possible adjustments. Once we have arrived at the resolution stage, we can define the overall alignment for an analysis.

\begin{defn}[Overall Analysis Alignment]
For a given principle $k$, let $B_{ij}^{(k)}$ represent the baseline alignment between analyst $i$ and consumer $j$ for principle $k$. Let $\phi_i^{(k)}$ and $\theta_j^{(k)}$ represent allocation adjustments made by analyst $i$ and consumer $j$, respectively, upon reaching the resolution stage. Then the overall analysis alignment for principle $k$ is defined as
\begin{eqnarray}
D_{ij}^{(k)}
& = & 
B_{ij}^{(k)} + \left(\phi_i^{(k)} - \theta_j^{(k)}\right)\nonumber\\
& = &
B_{ij}^{(k)} + R_{ij}^{(k)}
\label{eq:distance}
\end{eqnarray}
The overall analyst-consumer alignment for a given data analysis is then characterized by the collection $D_{ij}^{(k)}$s for the entire set of $K$ principles $\mathbf{D}_{ij}=\left(D_{ij}^{(1)},\dots,D_{ij}^{(K)}\right)$. 
\label{def:overallalign}
\end{defn}
In Equation~(\ref{eq:distance}) above, $R_{ij}^{(k)}=\phi_i^{(k)}-\theta_j^{(k)}$ represents the ``residual alignment adjustment'' for principle $k$ made by the analyst and the consumer after considering their baseline difference.

In the next few definitions, we propose three ways to characterize the alignment of a data analysis pairwise between the analyst $i$ and consumer $j$: \textit{Strong Pairwise Alignment} (Definition~\ref{def:strongpairwise}) and \textit{Weak Pairwise Alignment} (Definition~\ref{def:weakpairwise}). 

\begin{defn}[Strong Pairwise Alignment]
A data analysis is strongly aligned between the pairing of analyst $i$ with consumer $j$ if for some small $\varepsilon > 0$,
$$
\left\|\mathbf{D}_{ij}\right\|_\infty 
= 
\max_{k=1,\dots,K} \left|D_{ij}^{(k)}\right| < \varepsilon.
$$
\label{def:strongpairwise}
\end{defn}
\noindent
The definition of strong pairwise alignment requires that the differences are never too large for any given principle.

We can propose a weaker form of analysis alignment that allows for some differences in how the principles are allocated, but places a limit on the total variation of those differences.
\begin{defn}[Weak Pairwise Alignment]
A data analysis is weakly aligned between the pairing of analyst $i$ with consumer $j$ if for some $p\geq 1$ and small $\varepsilon>0$,
\begin{equation}
\left\|\mathbf{D}_{ij}\right\|_p
=
\left(\frac{1}{K}\sum_{k=1}^K \left|D_{ij}^{(k)}\right|^p\right)^{1/p} < \varepsilon.
\label{eq:lpnorm}
\end{equation}
\label{def:weakpairwise}
\end{defn}
\noindent
With this definition, the analyst and consumer may differ slightly with respect to how each principle is allocated, but the overall differences between analyst and consumer must be small. The choice of $p$ here (and hence, the norm) will have an impact on how much deviation is allowed between analyst and consumer and how much any single principle may differ.

\subsection{Groups of Consumers}
\label{sec:group-audiences}
Up until this point we have assumed the consumer consisted of a single member indexed by $j$. However, it is common that a data analysis will be reviewed by or presented to a group of consumers. If there are $J$ consumers, then we can extend Equation~(\ref{eq:basedistance}) to be as follows.
\begin{eqnarray}
B_{i\cdot}^{(k)}
& = &
\frac{1}{J}\sum_{j=1}^J B_{ij}^{(k)}\nonumber \\
& = &
\psi^{(k)}_{i} - \frac{1}{J}\sum_{j=1}^J\kappa^{(k)}_{j}\nonumber \\
& = &
\left(\lambda_{f_i}^{(k)} - \frac{1}{J}\sum_j \lambda_{f_j}^{(k)}\right) + \left(\delta_{i}^{(k)}-\frac{1}{J}\sum_j\eta_{j}^{(k)}\right).
\label{def:basegroupalign}
\end{eqnarray}

Likewise, we can extend Equation~(\ref{eq:distance}) as
\begin{eqnarray}
D_{i\cdot}^{(k)}
& = &
\frac{1}{J}\sum_{j=1}^J B_{ij}^{(k)} + \left(\phi_i^{(k)} - \frac{1}{J}\sum_j\theta_j^{(k)}\right)\nonumber\\
& = &
B_{i\cdot}^{(k)} + R_{i\cdot}^{(k)}.
\label{def:groupalign}
\end{eqnarray}
In this formulation, $D_{i\cdot}^{(k)}$ is small if principle $k$ is allocated by the analyst in a manner that is equal to the mean of the group of consumers. With this extension of the principle-specific allocation difference, we can modify our definition of pairwise potential alignment to be defined for consumers with more than one member.

\section{Properties of Analytic Alignment}
\label{sec:properties}

In this section, we consider properties of alignment. The definitions of pairwise alignment and potential pairwise alignment presented in Section~\ref{sec:framework} lead to several implications about data analyses and what could potentially be done to improve the alignment of any given analysis.

\begin{prop}
        Let $\mathbf{R}_{ij}=\left(R_{ij}^{(1)},\dots,R_{ij}^{(K-1)}\right)$ be the vector of residual alignment adjustments for the $K$ principles. We say that the alignment of an analysis is improved from the baseline stage to the resolution stage (in the weak sense) if $\|\mathbf{D}_{ij}\|_2\leq\|\mathbf{B}_{ij}\|_2$. Let $\alpha$ be a scalar value and define $\mathbf{R}_{ij}=-\alpha\mathbf{B}_{ij}$. Then the alignment of the analysis at the resolution stage is improved if $0\leq\alpha\leq 2$.
\label{prop:alphaadjust}
\end{prop}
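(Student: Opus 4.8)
The plan is to exploit the fact that setting $\mathbf{R}_{ij} = -\alpha\mathbf{B}_{ij}$ forces the resolution-stage vector $\mathbf{D}_{ij}$ to be a scalar multiple of the baseline vector $\mathbf{B}_{ij}$, which collapses the entire claim into a single one-dimensional inequality in $\alpha$. First I would substitute the hypothesis into the definition $\mathbf{D}_{ij} = \mathbf{B}_{ij} + \mathbf{R}_{ij}$ from Equation~(\ref{eq:distance}), obtaining $\mathbf{D}_{ij} = \mathbf{B}_{ij} - \alpha\mathbf{B}_{ij} = (1-\alpha)\mathbf{B}_{ij}$. This identity holds componentwise, so no structural assumption about the individual $B_{ij}^{(k)}$ is required, and $\mathbf{D}_{ij}$ and $\mathbf{B}_{ij}$ are collinear.

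Next I would invoke the absolute homogeneity of the Euclidean norm, $\|c\mathbf{v}\|_2 = |c|\,\|\mathbf{v}\|_2$, to write $\|\mathbf{D}_{ij}\|_2 = |1-\alpha|\,\|\mathbf{B}_{ij}\|_2$. The improvement condition $\|\mathbf{D}_{ij}\|_2 \leq \|\mathbf{B}_{ij}\|_2$ then reads $|1-\alpha|\,\|\mathbf{B}_{ij}\|_2 \leq \|\mathbf{B}_{ij}\|_2$. In the nondegenerate case $\|\mathbf{B}_{ij}\|_2 > 0$ I would divide through to reduce this to the scalar inequality $|1-\alpha| \leq 1$, equivalently $-1 \leq 1-\alpha \leq 1$, which rearranges exactly to $0 \leq \alpha \leq 2$. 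I would then handle the degenerate case $\mathbf{B}_{ij} = \mathbf{0}$ separately, where both norms vanish and the inequality holds for every $\alpha$, so the stated range remains valid (indeed sufficient).

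There is no genuine obstacle here: the argument is a direct computation once one notices the collinearity of $\mathbf{D}_{ij}$ and $\mathbf{B}_{ij}$, so the plan above is essentially the proof. The only point deserving a moment of care is the normalizing constant $1/K$ appearing inside the $\ell_p$ norm of Definition~\ref{def:weakpairwise}; since for $p=2$ this factor multiplies $\|\mathbf{D}_{ij}\|_2$ and $\|\mathbf{B}_{ij}\|_2$ identically, it cancels in the comparison and has no effect on the threshold, so the same bound $0\leq\alpha\leq 2$ emerges whether one uses the normalized or the unnormalized Euclidean norm.
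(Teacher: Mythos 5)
Your proposal is correct and follows essentially the same route as the paper's proof: substitute $\mathbf{R}_{ij}=-\alpha\mathbf{B}_{ij}$ into $\mathbf{D}_{ij}=\mathbf{B}_{ij}+\mathbf{R}_{ij}$, use homogeneity of the norm to get $\|\mathbf{D}_{ij}\|_2=|1-\alpha|\,\|\mathbf{B}_{ij}\|_2$, and reduce to $|1-\alpha|\leq 1$, i.e.\ $0\leq\alpha\leq 2$. Your treatment is in fact slightly more careful than the paper's, since you explicitly dispose of the degenerate case $\mathbf{B}_{ij}=\mathbf{0}$ (where the paper's final ``can only be true if'' step would fail) and note that the $1/K$ normalization in Definition~\ref{def:weakpairwise} cancels on both sides.
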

\begin{proof}
We can show directly that $\|\mathbf{D}_{ij}\|_2\leq\|\mathbf{B}_{ij}\|_2$ implies that
\begin{eqnarray*}
    \|\mathbf{B}_{ij} + \mathbf{R}_{ij}\|_2 
    & \leq & 
    \|\mathbf{B}_{ij}\|_2\\
    \|\mathbf{B}_{ij} - \alpha\mathbf{B}_{ij}\|_2
    & \leq & 
    \|\mathbf{B}_{ij}\|_2\\
    \|(1-\alpha)\mathbf{B}_{ij}\|_2
    & \leq & 
    \|\mathbf{B}_{ij}\|_2\\
    |1-\alpha|\|\mathbf{B}_{ij}\|_2
    & \leq &
    \|\mathbf{B}_{ij}\|_2
\end{eqnarray*}
The final statement can only be true if $0\leq\alpha\leq 2$.
\end{proof}

\begin{prop}
    The overall alignment of an analysis is optimized (in the weak sense) when $\|\mathbf{D}_{ij}\|_2=0$, which follows when $\mathbf{R}_{ij} = -\mathbf{B}_{ij}$.
\label{prop:alphaoptimal}
\end{prop}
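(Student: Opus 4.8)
The plan is to exploit the definitional identity $\mathbf{D}_{ij} = \mathbf{B}_{ij} + \mathbf{R}_{ij}$ recorded in Equation~(\ref{eq:distance}), together with the elementary fact that $\|\cdot\|_2$ is a genuine norm. Since ``optimization in the weak sense'' refers to making the $\ell_2$ measure of overall alignment as small as possible---a small norm indicating tight alignment, per Definition~\ref{def:weakpairwise}---the optimum is simply the global minimum of the map $\mathbf{R}_{ij}\mapsto\|\mathbf{B}_{ij}+\mathbf{R}_{ij}\|_2$ over the residual adjustment $\mathbf{R}_{ij}$.

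First I would observe that any norm is nonnegative, so $\|\mathbf{D}_{ij}\|_2\geq 0$ for every choice of residual adjustment. Hence $0$ is a lower bound on the overall alignment, and it is optimal in the sense that no allocation adjustment can produce a smaller value. Second, I would invoke the definiteness (point-separating) property of the norm: $\|\mathbf{D}_{ij}\|_2 = 0$ if and only if $\mathbf{D}_{ij} = \mathbf{0}$. Substituting the identity $\mathbf{D}_{ij} = \mathbf{B}_{ij} + \mathbf{R}_{ij}$ and solving the vector equation $\mathbf{B}_{ij} + \mathbf{R}_{ij} = \mathbf{0}$ yields $\mathbf{R}_{ij} = -\mathbf{B}_{ij}$, which is precisely the residual adjustment that attains the optimum. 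This establishes both claims of the proposition: the optimum value is $\|\mathbf{D}_{ij}\|_2=0$, and it follows from the stated choice of $\mathbf{R}_{ij}$.

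There is no serious obstacle here, as the result is an immediate corollary of the positivity and definiteness of the norm; it can also be read as the $\alpha=1$ endpoint of Proposition~\ref{prop:alphaadjust} under the parametrization $\mathbf{R}_{ij}=-\alpha\mathbf{B}_{ij}$. The only point requiring a word of care is interpretive rather than technical---confirming that ``optimized'' means minimizing (not maximizing) the alignment norm, consistent with the convention in Definitions~\ref{def:strongpairwise} and~\ref{def:weakpairwise} that a small norm corresponds to good alignment. I would additionally remark that definiteness yields uniqueness for free: $\mathbf{R}_{ij}=-\mathbf{B}_{ij}$ is the \emph{only} residual adjustment achieving $\|\mathbf{D}_{ij}\|_2=0$, a mild strengthening that the proposition does not explicitly assert but that follows from the same argument.
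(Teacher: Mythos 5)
Your proof is correct, but it takes a different route from the paper's. The paper disposes of this proposition in one line: ``Follows directly from Proposition~\ref{prop:alphaadjust} by setting $\alpha=1$,'' i.e., it treats the result as the $\alpha=1$ endpoint of the parametrized family $\mathbf{R}_{ij}=-\alpha\mathbf{B}_{ij}$. You instead give a direct, self-contained argument from the axioms of the norm: nonnegativity shows $0$ is a global lower bound for $\|\mathbf{D}_{ij}\|_2$ over all residual adjustments, and definiteness shows the bound is attained exactly when $\mathbf{B}_{ij}+\mathbf{R}_{ij}=\mathbf{0}$, i.e., $\mathbf{R}_{ij}=-\mathbf{B}_{ij}$. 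Your version is arguably the more complete of the two: Proposition~\ref{prop:alphaadjust} with $\alpha=1$ only certifies \emph{improvement} ($\|\mathbf{D}_{ij}\|_2\leq\|\mathbf{B}_{ij}\|_2$), so the paper's one-line citation still tacitly relies on the two facts you make explicit---that $\mathbf{R}_{ij}=-\mathbf{B}_{ij}$ forces $\mathbf{D}_{ij}=\mathbf{0}$ and that no adjustment can do better than norm zero. You also note, correctly, that definiteness gives uniqueness of the optimal adjustment for free, a mild strengthening the paper does not state. What the paper's route buys is brevity and continuity with the preceding proposition; what yours buys is logical completeness and independence from the specific one-parameter family $\mathbf{R}_{ij}=-\alpha\mathbf{B}_{ij}$, since your argument optimizes over \emph{all} residual adjustments rather than only the scalar-multiple ones.
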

\begin{proof}
    Follows directly from Proposition~\ref{prop:alphaadjust} by setting $\alpha=1$.
\end{proof}

Propositions~\ref{prop:alphaadjust} and~\ref{prop:alphaoptimal} suggest that the alignment of an analysis can be improved after the baseline stage by adjusting principle allocations in specific ways at the resolution stage. For example, if $B_{ij}^{(k)} > 0$, indicating that the analyst places greater allocation on principle $k$ (relative to the reference weight) than does the consumer, then the overall alignment can represent an improvement over the baseline alignment if the analyst sets $\phi_i^{(k)}=-0.5B_{ij}^{(k)}$ and the consumer sets $\theta_j^{(k)}=0.5B_{ij}^{(k)}$. This ``meet-in-the-middle" scenario represents one example of how to achieve $D_{ij}^{(k)}=0$ for a given principle, where the analyst reduces their allocation to principle $k$ and the consumer increases their allocation.

\begin{prop}[Field-matched Alignment]\label{prop:field} 
The overall alignment of an analysis is improved if either (1) the analyst and consumer are in the same field; or (2) the analyst and consumer have the same value for their respective field-specific means.
\end{prop}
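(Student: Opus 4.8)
The plan is to read ``improved'' in the weak ($\ell_2$) sense of Proposition~\ref{prop:alphaadjust} and to show that matching fields or field-specific means reduces the \emph{expected} squared distance $E\|\mathbf{B}_{ij}\|_2^2$ relative to an otherwise identical pair whose field means differ. The first observation is that both hypotheses collapse the deterministic part of the baseline alignment: condition~(1), $f_i=f_j$, gives $\lambda_{f_i}^{(k)}=\lambda_{f_j}^{(k)}$ for every $k$, and condition~(2) assumes this same equality directly. In either case the field-difference term in Equation~(\ref{eq:basedistance}) vanishes and the baseline alignment reduces to the purely random part $B_{ij}^{(k)}=\delta_i^{(k)}-\eta_j^{(k)}$.

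Next I would compute the expected squared $\ell_2$ norm term by term. Writing $a_k=\lambda_{f_i}^{(k)}-\lambda_{f_j}^{(k)}$ for the deterministic field difference and $b_k=\delta_i^{(k)}-\eta_j^{(k)}$ for the deviation difference, we have $B_{ij}^{(k)}=a_k+b_k$. Since $\delta_i^{(k)}$ and $\eta_j^{(k)}$ each have mean $0$ and are independent, $E[b_k]=0$, so the cross term in $E[(a_k+b_k)^2]=a_k^2+2a_k E[b_k]+E[b_k^2]$ drops out, leaving $E[(B_{ij}^{(k)})^2]=a_k^2+\mathrm{Var}(\delta_i^{(k)})+\mathrm{Var}(\eta_j^{(k)})$. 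Summing over $k$ and dividing by $K$ as in Definition~\ref{def:weakpairwise},
\begin{equation*}
E\|\mathbf{B}_{ij}\|_2^2=\frac{1}{K}\sum_{k=1}^K\left[a_k^2+\mathrm{Var}(\delta_i^{(k)})+\mathrm{Var}(\eta_j^{(k)})\right].
\end{equation*}

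Because each $a_k^2\geq 0$, setting every $a_k=0$---which is exactly what conditions~(1) and~(2) accomplish---removes a nonnegative contribution from every summand and can therefore only decrease $E\|\mathbf{B}_{ij}\|_2^2$, with strict decrease whenever some $a_k\neq 0$ in the comparison case. This establishes the claimed improvement at baseline; since the resolution-stage adjustments enter additively as $\mathbf{D}_{ij}=\mathbf{B}_{ij}+\mathbf{R}_{ij}$ with $\mathbf{R}_{ij}$ held fixed across the comparison, the identical reduction transfers to $E\|\mathbf{D}_{ij}\|_2^2$.

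I expect the main obstacle to be conceptual rather than computational: pinning down what ``improved'' is measured against. A deterministic, per-realization claim is false---if $a_k$ and $b_k$ happen to have opposite signs, $|a_k+b_k|$ can be smaller than $|b_k|$---so the statement must be read in expectation, and the clean cancellation of the cross term relies essentially on the modeling assumptions that the deviations are mean-zero, mutually independent, and independent of the fixed field means. Making those assumptions explicit, and framing the comparison as one against an otherwise identical analyst--consumer pair whose field means differ, is the crux of a correct argument.
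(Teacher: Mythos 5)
Your proof is correct, and it is strictly more careful than the paper's own. The paper's entire argument is your first paragraph: under hypothesis (1), $f_i = f_j$ forces $\lambda^{(k)}_{f_i} - \lambda^{(k)}_{f_j} = 0$ for all $k$, and hypothesis (2) assumes that equality directly; the paper then simply declares alignment ``improved'' once the field term vanishes, with no further justification. You go on to do what the paper does not: you pin down a sense in which removing the deterministic term is actually an improvement. Your decomposition $E\bigl[(B_{ij}^{(k)})^2\bigr] = a_k^2 + \mathrm{Var}(\delta_i^{(k)}) + \mathrm{Var}(\eta_j^{(k)})$, with the cross term killed by the mean-zero and independence assumptions (which the paper does state in Section~\ref{sec:statmodel}), shows the improvement holds in expected squared $\ell_2$ norm, and your observation that the pointwise claim is false---a realized $|a_k + b_k|$ can be smaller than $|b_k|$ when the signs oppose---identifies a real gap in the paper's reasoning that the authors pass over silently. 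What the paper's version buys is brevity and the qualitative interpretation (shared conventions reduce baseline mismatch); what yours buys is an actual theorem, with the hypotheses under which ``improved'' is a provable statement made explicit. The one caveat worth flagging in your write-up: the final sentence about transferring the reduction to $E\|\mathbf{D}_{ij}\|_2^2$ needs $\mathbf{R}_{ij}$ to be not only held fixed but also uncorrelated with (or independent of) the random part of $\mathbf{B}_{ij}$ for the cross term to vanish there too; since $\phi_i^{(k)}$ and $\theta_j^{(k)}$ plausibly respond to the realized baseline difference, that independence is a modeling assumption you should state rather than inherit.
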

\begin{proof}
If analyst $i$ and consumer $j$ have $f_i = f_j$, then we have $\lambda^{(k)}_{f_i} - \lambda^{(k)}_{f_j} = 0$ for all $k$. Similarly, if analyst and consumer are in different fields, alignment is improved if we have for the field-specific means $\lambda^{(k)}_{f_i} - \lambda^{(k)}_{f_j}=0$.
\end{proof}

The interpretation of Proposition~\ref{prop:field} is that members of the same field share similar conventions with respect to a given principle. For example, if ``computational reproducibility'' is the $k^{th}$ principle, then members of the field of computational biology (for example), which generally places a high allocation on computational reproducibility, might on average allocate more resources or effort to that principle. We might then expect data analyses in this field to generally demonstrate a high relative allocation on reproducibility, with perhaps code and data routinely made available. As a result, we would expect a higher potential for alignment (i.e. smaller $D_{ij}^{(k)}$) if analyst $i$ and consumer $j$ are both in the field of computational biology. Furthermore, it may be that even if analyst $i$ and consumer $j$ are not in the same field, their fields may share similar ideas about how much effort should be devoted to principle $k$.

\begin{prop}[Irreducible Baseline Principle Distance] 
$\|\mathbf{B}_{ij}\|>0$ with probability $1$ for all $i\ne j$.
\end{prop}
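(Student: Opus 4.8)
The plan is to show that the event $\{\|\mathbf{B}_{ij}\| = 0\}$ has probability zero, which is equivalent to the claim since every norm is nonnegative. For any of the norms used in the paper (the $\ell_\infty$ norm of Definition~\ref{def:strongpairwise} or the $\ell_p$ norm of Equation~(\ref{eq:lpnorm})), $\|\mathbf{B}_{ij}\| = 0$ holds if and only if every coordinate vanishes, i.e.\ $B_{ij}^{(k)} = 0$ for all $k = 1,\dots,K$. In particular $\{\|\mathbf{B}_{ij}\| = 0\} \subseteq \{B_{ij}^{(k)} = 0\}$ for each fixed $k$, so it suffices to exhibit a single principle $k$ with $P(B_{ij}^{(k)} = 0) = 0$; then $P(\|\mathbf{B}_{ij}\| = 0) \le P(B_{ij}^{(k)} = 0) = 0$ and the result follows by taking complements. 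No assumption on the particular norm is needed, and only one coordinate must be controlled.

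First I would fix a non-reference principle $k \ne r$ and rewrite the baseline difference from Equation~(\ref{eq:basedistance}) as $B_{ij}^{(k)} = c^{(k)} + \bigl(\delta_i^{(k)} - \eta_j^{(k)}\bigr)$, where $c^{(k)} = \lambda_{f_i}^{(k)} - \lambda_{f_j}^{(k)}$ is a deterministic constant (the field-specific means are fixed, not random) and $\delta_i^{(k)}, \eta_j^{(k)}$ are the random individual-level deviations. The event $\{B_{ij}^{(k)} = 0\}$ is then exactly $\{\delta_i^{(k)} - \eta_j^{(k)} = -c^{(k)}\}$, the event that the difference of the two deviations hits one prescribed real value.

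Next I would invoke the modeling assumption stated in Section~\ref{sec:statmodel} that, because $i \ne j$, the deviations $\delta_i^{(k)}$ and $\eta_j^{(k)}$ are independent. Assuming at least one of them is non-atomic --- say $\delta_i^{(k)}$, the consumer case being symmetric --- their difference is non-atomic as well: conditioning on $\eta_j^{(k)}$ and using independence, $P\bigl(\delta_i^{(k)} - \eta_j^{(k)} = -c^{(k)}\bigr) = \mathbb{E}\bigl[P(\delta_i^{(k)} = \eta_j^{(k)} - c^{(k)} \mid \eta_j^{(k)})\bigr] = 0$, since a non-atomic law assigns probability zero to every singleton. Hence $P(B_{ij}^{(k)} = 0) = 0$, which closes the argument via the reduction above. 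The hypothesis $i \ne j$ is essential here: if $i = j$ the analyst is their own consumer, the deviations coincide and the fields match, forcing $\mathbf{B}_{ij} = \mathbf{0}$ deterministically, so the statement would fail.

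The step I expect to be the main obstacle is justifying the non-atomicity on which the argument rests. The paper assumes only that the deviations have mean $0$ and finite variance, and these properties alone do not rule out a discrete (atomic) law, for which $P(B_{ij}^{(k)} = 0)$ could be strictly positive. I would therefore either add the explicit hypothesis that the deviation distributions are continuous (absolutely continuous with respect to Lebesgue measure, or at least non-atomic), or argue that this is the intended reading of ``randomly distributed'' across a population of analysts and consumers. Everything else is routine, and the finite-variance condition itself plays no role beyond ensuring the deviations are well-defined random variables.
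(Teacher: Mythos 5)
Your proof is correct, and it is in fact more careful than the paper's own. The paper argues from the same underlying assumption (continuity of the random deviations), but its stated reasoning is: since $\delta_i^{(k)}$ and $\eta_j^{(k)}$ are continuous, each is nonzero with probability $1$, ``therefore'' no $B_{ij}^{(k)}$ can be zero. As written this is a non sequitur: $B_{ij}^{(k)} = \left(\lambda_{f_i}^{(k)} - \lambda_{f_j}^{(k)}\right) + \left(\delta_i^{(k)} - \eta_j^{(k)}\right)$ can vanish even when both deviations are individually nonzero, namely whenever $\delta_i^{(k)} - \eta_j^{(k)}$ happens to equal $-\left(\lambda_{f_i}^{(k)} - \lambda_{f_j}^{(k)}\right)$. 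Your argument fixes exactly this: you work with the \emph{difference} of the deviations, invoke the paper's stated independence of $\delta_i^{(k)}$ and $\eta_j^{(k)}$, and show by conditioning that the difference of independent variables, at least one non-atomic, is itself non-atomic, so it hits the prescribed constant with probability zero. You also correctly isolate the reduction to a single coordinate (the paper implicitly requires \emph{all} coordinates to vanish but never says so), and your flagged ``obstacle'' is well placed: the model section assumes only mean zero and finite variance for the deviations, which does not rule out atoms; the continuity the paper's proof invokes is an unstated additional hypothesis, and your proposal to make it explicit is the right repair. The only substantive point of difference, then, is rigor rather than strategy --- both proofs rest on continuity of the deviations, but yours supplies the conditioning step and the independence that make the conclusion actually follow.
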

\begin{proof}
Because $\delta_i^{(k)}$ and $\eta_j^{(k)}$ are assumed to be continuous and random for all $k$, we have $\delta_i^{(k)} \ne 0$ and $\eta_j^{(k)}\ne 0$ with probability~$1$. Therefore, none of the $B_{ij}^{(k)}$ values can be equal to zero.
\end{proof}

Our model assumes that at the baseline stage, there will be some misalignment in principles due to at least individual differences between analyst and consumer. In defining strongly and weakly aligned analyses, we also allow for some differences between analyst and consumers at the resolution stage. The magnitude of allowable differences in principle allocations, $\varepsilon$, is likely to be analysis-specific and will depend in part on the context and circumstances surrounding the analysis. For a quick, ``work-in-progress'' type of analysis, the consumer may allow for larger deviations, with the presumption that the final version will have the appropriate principle allocations. More ``final'' analyses, such as a published paper, may require a stricter adherence to the consumer's principle allocations in order to optimize alignment.

\begin{prop}[Large audience effect]
Suppose the consumer is a large audience with $J$ members and that the analyst and the consumers are all in the same field (i.e. $f_i = f_j$ for $j=1,\dots,J$). Then the baseline alignment of an analysis for each principle is solely driven by the analyst's deviation from their field-specific mean.
\label{prop:largeaudience}
\end{prop}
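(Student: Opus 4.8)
The plan is to begin from the group-level baseline alignment in Equation~(\ref{def:basegroupalign}), collapse the field-specific term using the matching hypothesis, and then control the residual consumer-average deviation as the audience size grows.

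First I would substitute $f_j = f_i$ for every $j = 1,\dots,J$ into
$$
B_{i\cdot}^{(k)} = \left(\lambda_{f_i}^{(k)} - \frac{1}{J}\sum_{j=1}^J \lambda_{f_j}^{(k)}\right) + \left(\delta_{i}^{(k)} - \frac{1}{J}\sum_{j=1}^J \eta_{j}^{(k)}\right).
$$
Since $\lambda_{f_j}^{(k)} = \lambda_{f_i}^{(k)}$ for each $j$, the consumer field-mean average equals $\lambda_{f_i}^{(k)}$ and the first parenthesized term is identically zero. This reduces the baseline alignment to
$$
B_{i\cdot}^{(k)} = \delta_{i}^{(k)} - \frac{1}{J}\sum_{j=1}^J \eta_{j}^{(k)}.
$$
This is the group analogue of the cancellation already exploited in Proposition~\ref{prop:field}: all systematic field-level structure vanishes, leaving only individual-level deviations.

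Second, I would treat $\bar\eta^{(k)} = \frac{1}{J}\sum_{j=1}^J \eta_{j}^{(k)}$ as a sample mean of the consumer deviations. Each $\eta_j^{(k)}$ has mean $0$ and finite variance, and under the (implicit) assumption that consumers are sampled independently from their population, the variance of $\bar\eta^{(k)}$ decays on the order of $1/J$, so the weak law of large numbers gives that $\bar\eta^{(k)}$ converges in probability to $0$ as $J \to \infty$. Hence $B_{i\cdot}^{(k)}$ converges in probability to $\delta_i^{(k)}$, and in the large-audience limit the baseline alignment for each principle $k$ is governed entirely by the analyst's own deviation from the common field mean.

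The main obstacle is interpreting ``solely driven by'' rigorously, because for any finite $J$ the term $\bar\eta^{(k)}$ is nonzero with probability one (consistent with the Irreducible Baseline Principle Distance proposition), so the claim cannot hold as an exact identity. I would therefore frame the result as a limiting statement and make explicit the independence-across-consumers assumption needed for the law of large numbers; if an almost-sure conclusion is preferred, the same finite-variance moment condition upgrades the convergence via the strong law. Stating that independence assumption precisely is the one point where the argument goes beyond the hypotheses written in the excerpt.
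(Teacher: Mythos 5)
Your proof follows essentially the same route as the paper's own: substitute $f_j = f_i$ to cancel the field-specific terms in Equation~(\ref{def:basegroupalign}), then argue that the averaged consumer deviation $\frac{1}{J}\sum_{j=1}^J \eta_j^{(k)}$ vanishes for large $J$, leaving only $\delta_i^{(k)}$. In fact your version is more careful than the paper's, which merely asserts $\frac{1}{J}\sum_{j=1}^J \eta_j^{(k)} \approx 0$ from the mean-zero assumption alone; your explicit appeal to the law of large numbers, the independence-across-consumers assumption it requires, and your observation that the claim can only hold as a limiting statement (consistent with the Irreducible Baseline Principle Distance proposition) tighten the published argument rather than diverge from it.
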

\begin{proof}
If $J$ is very large, so that there are many audience members, then $\frac{1}{J}\sum_{j=1}^J \eta_j^{(k)}\approx 0$ because we assume that each consumer's individual deviation from their field-specific mean has mean~0. Furthermore, if the analyst and audience members are in the same field, then $\lambda_{f_i}=\lambda_{f_j}$ for all $j$. Therefore, the only remaining non-zero element in Equation~(\ref{def:basegroupalign}) is $\delta_i^{(k)}$, the analyst's individual-level deviation from the field-specific mean.
\end{proof}
When analyses will be designed for larger audiences, one way to achieve alignment is to minimize $\delta_i^{(k)}$ which effectively results in bringing the analyst's individual difference from the field-specific mean closer to zero. 

\section{Data Example}
\label{sec:data}

In this section we present some concrete examples of the implications of aspects of our model for analytic alignment on the design of data analyses. In particular, we discuss the impact of total resources dedicated to an analysis, potential categories of consumers and analysts, and present some real-world data describing how analysis designs can change from the baseline to the resolution stage.

\subsection{Impact of $\alpha_i^{(0)}$}

The total amount of resources available for analyst $i$ to allocate is represented by the parameter $\alpha_i^{(0)}$, and likewise the total amount of resources the consumer expects will be allocated to the analysis is represented by $\omega_j^{(0)}$. For the analyst, $\alpha_i^{(0)}$ is inversely proportional to the variance of the observed principle allocations such that larger $\alpha_i^{(0)}$ result in less random variation in principle allocations. In the context of a data analysis, this can be interpreted as providing more resources, e.g. time, money, etc., results in more consistent analyses from the prospective of design principles. 

To demonstrate the impact of $\alpha_i^{(0)}$, we have simulated two scenarios with three principles where the mean model representing the analyst's underlying principle allocations, specified as in $\psi_i^{(k)}$ in Equation~\ref{eq:psi}, remains the same, but the total amount of resources allocated, $\alpha_i^{(0)}$, varies. Figure~\ref{fig:alpha} displays this difference, with $\alpha_i^{(0)} = 1$ in Figure~\ref{fig:alpha}(a) and  $\alpha_i^{(0)} = 100$ in Figure~\ref{fig:alpha}(b). 

\begin{figure}
\includegraphics[width=5.8in]{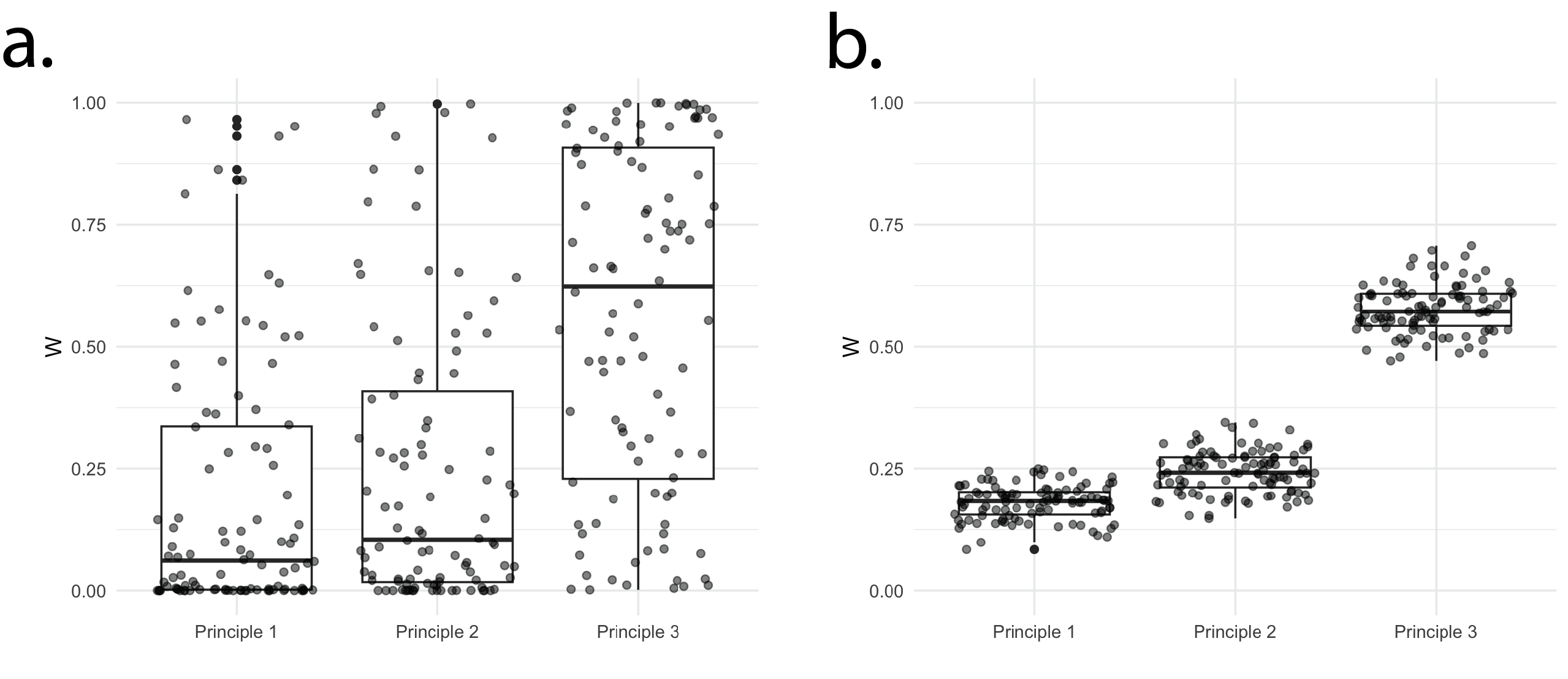}
\caption{Two simulated scenarios of observed principle allocations for an analyst with (a)~$\alpha_i^{(0)}=1$ and (b)~  $\alpha_i^{(0)}=100$. There is noticeable difference in the variability of observed allocations due to the differing $\alpha_i^{(0)}$ values.}
\label{fig:alpha}
\end{figure}

\subsection{Analyst - Consumer Scenarios}

We begin by presenting a series of scenarios to demonstrate the models discussed above. First, we will demonstrate three profiles of analyst/audience pairs: (1) accommodating analyst - intransigent consumer, (2) intransigent analyst - accommodating consumer, (3)  design-focused analyst - design-focused consumer. In all three scenarios, we reach alignment in expectation. In the first scenario, the driving force is the intransigent consumer, meaning they are inflexible with respect to updating their baseline expectations, in other words, $\theta_j^{(k)} = 0$ for $k = 1,\dots, K-1$. If the analyst is completely accommodating, meaning alignment is always achieved, $\phi_i^{(k)} = -\left[(\lambda_{fi}^{(k)} + \delta_i^{(k)})-(\lambda_{fj}^{(k)}+\eta_{j}^{(k)})\right]$ for $k = 1, \dots, K-1$. Figure~\ref{fig:bad-consumer} is an annotated example of this scenario for two principles, $a$ relative to $b$. In the second scenario, the analyst is intransigent and therefore $\phi_i^{(k)} = 0$ for $k = 1, \dots, K-1$ and $\theta_j^{(k)} = -\left[(\lambda_{fj}^{(k)} + \eta_j^{(k)})-(\lambda_{fi}^{(k)}+\delta_{i}^{(k)})\right]$ for $k = 1, \dots, K-1$. In the third scenario we describe both the analyst and consumer as \textit{design-focused}, by which we mean they are both motivated to improve alignment and are willing to negotiate over principle allocations. In this scenario, we can describe bounds for $\phi_i^{(k)}$ and $\theta_{j}^{(k)}$. In absolute value the analyst's post-resolution adjustment coefficient, $\phi_i^{(k)}$, is bounded by $\left|-\left[(\lambda_{fi}^{(k)} + \delta_i^{(k)})-(\lambda_{fj}^{(k)}+\eta_{j}^{(k)})\right]\right|$ and~0 with the qualification that the strict equality with 0 only occurs when alignment is achieved at baseline. Likewise, in absolute value the consumer's post-resolution adjustment coefficient, $\theta_j^{(k)}$, is bounded by $\left|-\left[(\lambda_{fj}^{(k)} + \eta_j^{(k)})-(\lambda_{fi}^{(k)}+\delta_{i}^{(k)})\right]]\right|$ and~0, again with the qualification that the strict equality with 0 only occurs when alignment is achieved at baseline.


\begin{figure}
\includegraphics[width=5.8in]{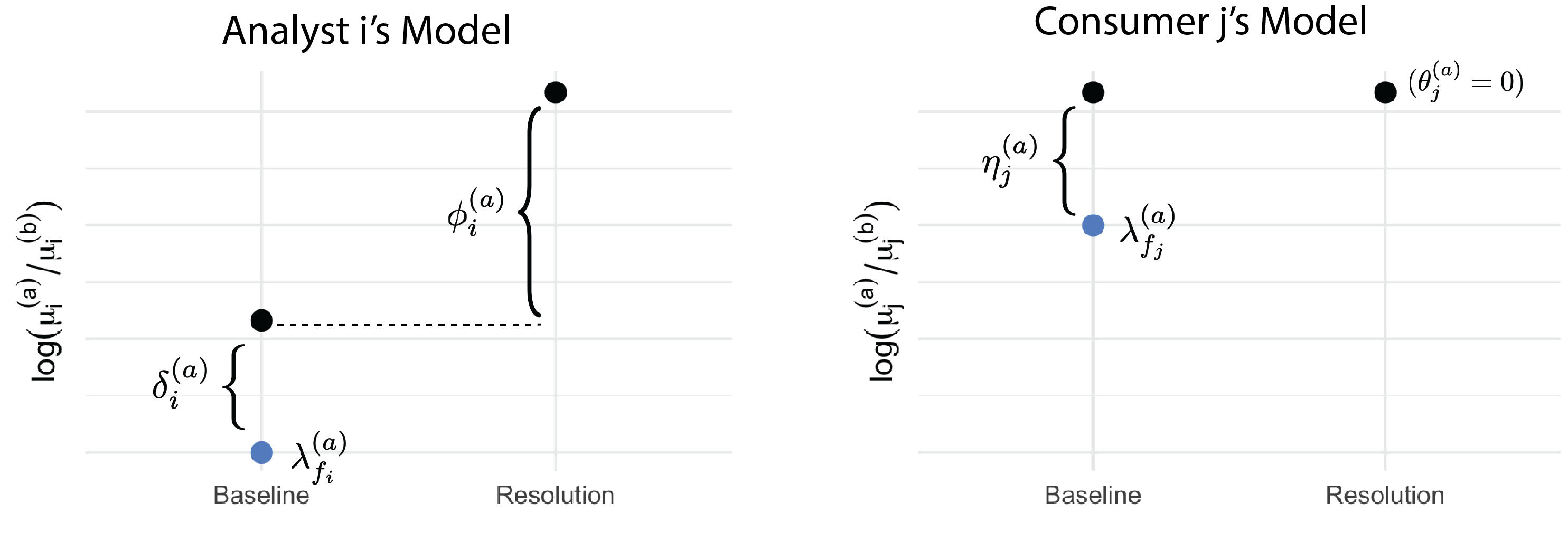}
\caption{Annotated display of analyst and consumer allocations pre- and post-negotiation for principle \textit{a} for the accommodating analyst - intransigent consumer pair. The left panel displays analyst $i$'s model, the right consumer $j$'s. In the left panel, the baseline value for the analyst is shown on the left, $\hat\lambda_{fi}^{({a})}+\hat\delta_i^{{(a)}}$. Then the resolution stage value after the analysts-consumer negotiation is displayed on the right, $\hat\lambda_{fi}^{({a})}+\hat\delta_i^{{(a)}} + \hat\phi_i^{({a})}$. The distance between the black point on the left to the point on the right represents $\hat\phi_i^{({a})}$. The distance between the top black point, analyst $i$'s baseline value for principle a, and the blue point, $\hat\lambda_{fi}^{({a})}$, is $\hat\delta_i^{({a})}$. The vertical distance between analyst $i$'s baseline value and the resolution stage value on the right represents $\hat\phi_i^{({a})}$. The bottom panel displays the model for consumer $j$, with the blue dot on the left representing $\lambda_{fj}^{(a)}$ and the black dot on the left representing  ($\lambda_{fj}^{(a)} + \eta_j^{(a)}$). There is no distance between the top black point on the left and the point on the right since this consumer is \textit{intransigent} and therefore $\theta_j^{(a)}=0$.}
\label{fig:bad-consumer}
\end{figure}

\subsection{Case Study}
\label{sec:casestudy}

In this section we describe a case study of analytic alignment. The data were collected from 26 students enrolled in a undergraduate capstone course at Wake Forest University. The study was approved by the Wake Forest University Institutional Review Board (IRB00025295). Participants were assigned to seven groups each tasked with completing a data analysis for a client. Students were taught six principles for designing a data analysis~\citep{dm2022design} (clarity, exhaustive, data-matching, reproducibility, second order, and skeptical) and given an initial problem statement from their assigned client. They were first asked to allocate each of the six design principles in the context of their assigned data analysis task to describe how much relative time they would allocate to each one, resulting in baseline analyst allocations of $\hat\lambda_{fi}^{(k)}+\hat\delta_{i}^{(k)}$ for all $i=1,...,26$ and $k=1,...,6$. For simplicity assume there were no additional covariates, $\mathbf{x}_i$ that contributed to the allocations. Students discussed these baseline allocations with their group, estimating the \textit{group-specific} allocation for each principle, $\hat\lambda_{f\cdot}$ for $f=1,\dots, 7$. These two steps fall in the \textit{baseline} stage of~\ref{fig:stages}. Groups were then asked to hold a meeting with their client where they discussed this proposed allocation and receive input on the client's expectations (the \textit{analytic negotiation}).
Finally, the students reported a post-negotiation set of allocations, taking into account both their initial allocation and the clients expectation, in the \textit{resolution} stage $\hat\lambda_{fi}^{(k)}+\hat\delta_{i}^{(k)}+\hat\phi_i^{(k)}$. 

Figure~\ref{fig:data} displays the results for all students across all groups and principle allocations. The data show variability both within and between groups, indicating differences in how each group and its members perceived the importance of the design principles before and after client negotiations. Assuming each group represents its member's "field", the differences within group at baseline are captured as part of the proposed mathematical process by $\delta_{i}^{(k)}$. The differences between groups may represent the differences in the analyses themselves, as each group was assigned a different analysis task. We then see that most groups updated their allocations post-negotiation, potentially improving pairwise alignment.

\begin{figure}
\includegraphics[width=5.8in]{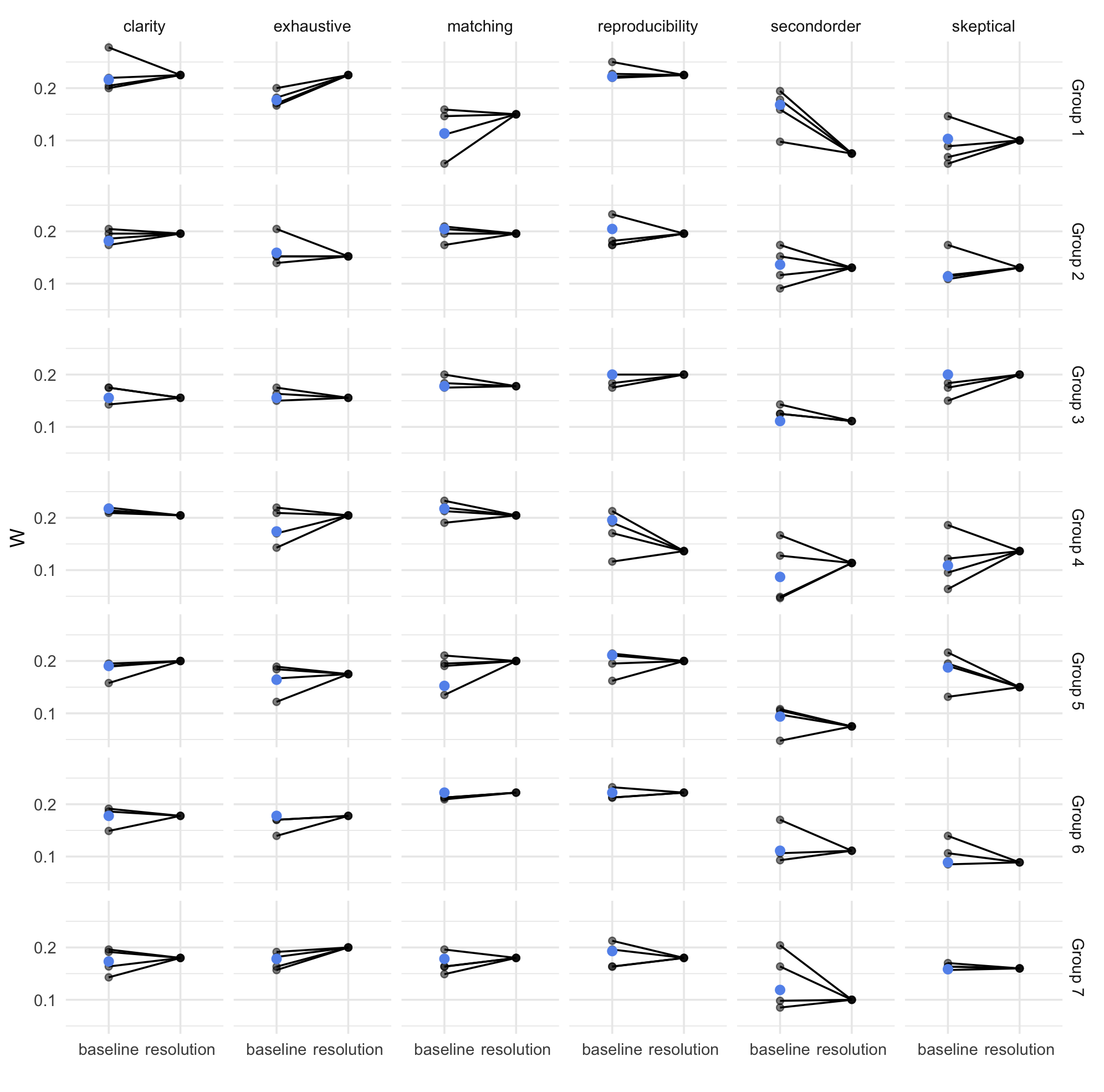}
\caption{Principle allocations before and after analytic negotiation. Each point represents the allocation assigned to the given principle by the analyst pre- and post-analytic negotiation. The blue point represents the agreed upon \textit{group-specific} allocation, a component of $\lambda_{f}^{(k)}$, which is the log relative allocation, as described in~\ref{sec:fixed-var}.}
\label{fig:data}
\end{figure}

\section{Discussion}
\label{sec:discussion}

In this paper we have developed a framework for characterizing the alignment of analysts and consumers in the development of a data analysis. This framework consists of a set of principles introduced earlier~\citep{dm2022design} and a statistical model that explains potential fixed and random variation in the allocation of these principles for a given analysis. We define the alignment of an analysis as the matching of allocations across principles between the analyst and consumer. Alignment can be obtained via negotiation between the analyst and consumer between the baseline and resolution stages of analytic planning.

A key point that we raise in this paper is the need for analyses to be designed properly in order to optimize their usefulness. Because the usefulness of an analysis varies between potential consumers of an analysis, there is a need to involve the consumer in the design of the analysis itself. Seasoned practitioners of data analysis will likely recognize this aspect of data analysis, as often there is an initial consultation that occurs where the specific requirements of an analysis are negotiated. We propose that such requirements can be driven by an underlying set of design principles and that negotiating agreement on how to allocate those principles to a given analysis is important for making an analysis useful. The data presented in Section~\ref{sec:casestudy} suggest that the alignment framework can serve as a teaching tool for students learning about the data analysis process.

The presentation we have taken is formal, mathematically speaking, because we believe the formalism can lead to valuable insights. For example, in Section~\ref{sec:framework} we described the notions of fixed and random variation between data analyses. Furthermore, it allows for a precise statement of what it means for an analysis to be aligned between consumer and analyst. Looking to the future, this mathematical formalism could provide a road map for developing approaches to estimating parameters we have specified from data, such as data analytic reports or papers. For example, the \textit{tidycode} package~\citep{tidycode2019} already allows for the automatic processing and classification of R code into data analysis activities using crowd-sourcing or pre-specified lexicons. Additionally, formalizing the mathematical framework for negotiation between analyst and audience allows us to leverage tools such as artificial intelligence, both for wide-spread pedagogical use as well as the study of how analysts and audiences interact in various environments. Our hope is that the mathematical model presented here serves as a solid foundation on which to build future knowledge about data analysis.

We argue that lack of alignment between analyst and consumer is an important failure mode for a data analysis and that consideration alignment in the early stages can greatly clarify the requirements of an analysis. Although there are numerous descriptions of failed data analyses~\citep{baggerly2009deriving}, specific definitions of how data analyses can fail, with detailed discussions of potential root causes, are lacking in the literature. Learning from failed data analyses is an important aspect of the training of any data analyst and the first step in that process is identifying when failure has occurred. Dialog between the analyst and the consumer about why a data analysis has failed can improve the quality of future analyses, as well as improve the quality of the relationship between analyst and consumer. Critical to such ``post-mortem'' discussions is that it be conducted in a blameless manner~\citep{park:2017} so that analyst and consumer can quickly come to a resolution over how problems should be fixed. While there is much previous work on assessments~\citep{Chance1997, Garfield2000, Chance2002} and project-based learning~\citep{Gnanadesikan1997, Smith1998, Tishkovskaya2012}, the literature provides little insight into how to recognize or quantitatively assess when, why and how a data analysis failure has occurred. Lack of alignment can serve as a specific factor to consider when addressing the failure of an analysis.

Our approach to characterizing data analysis failures shares many elements with the field of design thinking in its approach to building a solution matched to a specific consumer~\citep{cros:2011}. In some ways, one could think of a data analysis as a kind of ``product'', in the sense that it is not a naturally occurring object in nature. As such, someone---the analyst---must design the analysis in a manner that makes it useful to the consumer, or is aligned with the consumer's expectations and needs, much like any designed product. While the consumer could be one individual, or a group of individuals, each individual consumer plays a critical role in evaluating the quality of a given data analysis. Each consumer evaluates the quality with their own preconceived notions, characteristics, and biases towards valuing what makes a good or bad analysis~\citep{wildpfannkuch1999}. 

Our definition of alignment in data analysis depends solely on the participants---the analyst and the consumer---and the outputs of the data analysis. In theory, one could calculate the pairwise alignment of an analysis with just those elements. Critically, we do not consider events or information that occur outside the analysis or perhaps in the future. For example, an analysis may make certain conclusions based on the evidence available in the data that are later invalidated by more in-depth analysis (perhaps with better data). We do not therefore conclude that the original analysis was by definition a failure. At any given moment, an analysis can only draw on the data and evidence that are available. It therefore seems inappropriate to judge the alignment of a data analysis based on information that were not accessible at the time.


\clearpage 
\bibliographystyle{unsrtnat}
\bibliography{tods-alignment}

\end{document}